\documentclass[conference, letterpaper]{IEEEtran}
\IEEEoverridecommandlockouts
\usepackage {textcomp, gensymb}
\usepackage{amssymb}
\usepackage{amsmath}
\usepackage{bm}
\usepackage{comment}
\usepackage{amsthm}
\usepackage{graphicx}
\usepackage{gensymb}
\usepackage{subfigure}
\usepackage{float}
\usepackage{cite}
\usepackage{algorithm,algorithmicx}
\usepackage{algpseudocode}
\usepackage{setspace}
\usepackage{enumerate}
\usepackage{enumitem}
\usepackage{multirow}
\usepackage{amsfonts}
\usepackage{url}
\usepackage{diagbox}
\usepackage{color, soul}
\usepackage{amsthm}
\usepackage{type1cm}
\DeclareMathOperator*{\argmax}{arg\,max}

\algnewcommand\algorithmicinput{\textbf{INPUT: }}
\algnewcommand\Input{\item[\algorithmicinput]}
\algnewcommand\algorithmicoutput{\textbf{OUTPUT: }}
\algnewcommand\Output{\item[\algorithmicoutput]}

\newtheorem{theorem}{\bf~~Theorem}

\newtheorem{lemma}{\bf~~Lemma}
\usepackage[bottom=1.05in,top=0.71in,left=0.680in,right=0.680in]{geometry}
\begin{document}

\title{Channel Estimation for Holographic Communications in Hybrid Near-Far Field}

\author{
\vspace{-0.8 cm}\\
\IEEEauthorblockN{
\normalsize{Shaohua Yue}\IEEEauthorrefmark{1},
\normalsize{Shuhao Zeng}\IEEEauthorrefmark{1},
\normalsize{Liang Liu}\IEEEauthorrefmark{2},
\normalsize{and Boya Di}\IEEEauthorrefmark{1}\\}
\IEEEauthorblockA{
    \normalsize{\IEEEauthorrefmark{1}State Key Laboratory of Advanced Optical Communication Systems and Networks},\\
    \normalsize{School of Electronics, Peking University, Beijing, China}\\
\normalsize{\IEEEauthorrefmark{2}The Hong Kong Polytechnic University, Hong Kong SAR, China}\\
\normalsize{Email: yueshaohua@pku.edu.cn, shuhao.zeng@pku.edu.cn, liang-eie.liu@polyu.edu.hk, boya.di@pku.edu.cn}\\
} 
}

\maketitle

\begin{abstract}
To realize holographic communications, a potential technology for spectrum efficiency improvement in the future sixth-generation (6G) network, antenna arrays inlaid with numerous antenna elements will be deployed. 
However, the increase in antenna aperture size makes some users lie in the Fresnel region, leading to the hybrid near-field and far-field communication mode, where the conventional far-field channel estimation methods no longer work well. To tackle the above challenge, this paper considers channel estimation in a hybrid-field multipath environment, where each user and each scatterer can be in either the far-field or the near-field region. First, a joint angular-polar domain channel transform is designed to capture the hybrid-field channel's near-field and far-field features. 
We then analyze the \emph{power diffusion} effect in the hybrid-field channel, which indicates that the power corresponding to one near-field (far-field) path component of the multipath channel may spread to far-field (near-field) paths and causes estimation error. We design a novel power-diffusion-based orthogonal matching pursuit channel estimation algorithm (PD-OMP). It can eliminate the prior knowledge requirement of path numbers in the far field and near field, which is a must in other OMP-based channel estimation algorithms. Simulation results show that PD-OMP outperforms current hybrid-field channel estimation methods. 
\end{abstract}

\begin{IEEEkeywords}
Holographic communication, channel estimation, power diffusion, near-field communication.
\end{IEEEkeywords}

\section{Introduction}
To achieve the high spectrum efficiency required by the future sixth-generation (6G) network, holographic communication is a promising solution, where numerous antenna elements are integrated into a compact two-dimensional surface~\cite{holocom}. 
Potential implementation technologies include reconfigurable holographic surface~\cite{RHS} and extremely large reconfigurable intelligent surface~\cite{LIS}. 
Due to the increased radiation aperture size of the antenna array in holographic communications, the Fresnel region (radiating near-field region of the antenna) is significantly enlarged~\cite{Polar Domain}. As a result, a part of the users and scatterers are in the near-field region of the antenna array~\cite{nfs}, where the electromagnetic (EM) waves are characterized by spherical waves, while the others are located in the far-field region and the EM waves are modeled via uniform plane waves. This gives rise to the so-called \emph{hybrid near-far field} communication~\cite{hybridce1}. 


Most existing works focus on either the near-field~\cite{Polar Domain} or the far-field channel estimation~\cite{classicomp}. In \cite{classicomp, Polar Domain}, the polar domain and angular domain channel representation are proposed, respectively, so that the channel estimation problem can be reformulated as a compressed sensing problem and then solved by the orthogonal match pursuit (OMP) algorithm. In \cite{nfce}, the near-field channel estimation problem considering the non-stationarity is investigated, where the near-field region is divided into grids to perform the on-grid estimation. 
Few initial works \cite{hybridce1,hybridce2} consider the concept of a hybrid-field channel. They design channel estimation methods relying on the prior knowledge of the numbers of near-field paths and far-field paths such that the near-field and far-field path components are estimated separately.  

However, the above existing works have not considered the \emph{power diffusion} effect in the hybrid field, i.e., the power corresponding to one near-field (far-field) path component of the multipath channel may spread to far-field (near-field) paths. Therefore, the channel sparsity~\cite{sparsity} used in the far-field or near-field channel estimation does not hold. Moreover, for the general case where the numbers of the near-field paths and the far-field paths are unknown, channel estimation methods proposed in \cite{hybridce1,hybridce2} are not applicable.

In this paper, we consider the hybrid-field channel estimation where the power diffusion effect is considered, and no prior knowledge of the numbers of near-field paths and far-field paths is required. To this end, two challenges have arisen. \emph{First}, it is non-trivial to distinguish the far-field and the near-field paths in this case because the boundary of the near-field and far-field region changes with the path direction~\cite{channelmodel} and is hard to specify. \emph{Second}, the channel sparsity is damaged due to the power diffusion effect, which leads to the performance degradation of existing channel estimation algorithms. It is critical to consider the power diffusion effect to improve the estimation accuracy.  
 
To cope with the above challenges, we first propose the joint angular-polar domain channel representation, based on which the channel sparsity is partially reserved and the far-field paths can be distinguished from the near-field paths. The power diffusion effect in the hybrid field case is then analyzed, and a power diffusion-based OMP channel estimation algorithm (PD-OMP) is developed, which can overcome the performance degradation caused by the power diffusion effect and enhance the estimation accuracy without the prior knowledge of the numbers of paths in the far field and near field. Finally, the effectiveness of the proposed algorithm is proved through the simulation results.

\section{System Model}
\subsection{Scenario Description}
\begin{figure}[ht]
\centerline{\includegraphics[width=8cm,height=4.8cm]{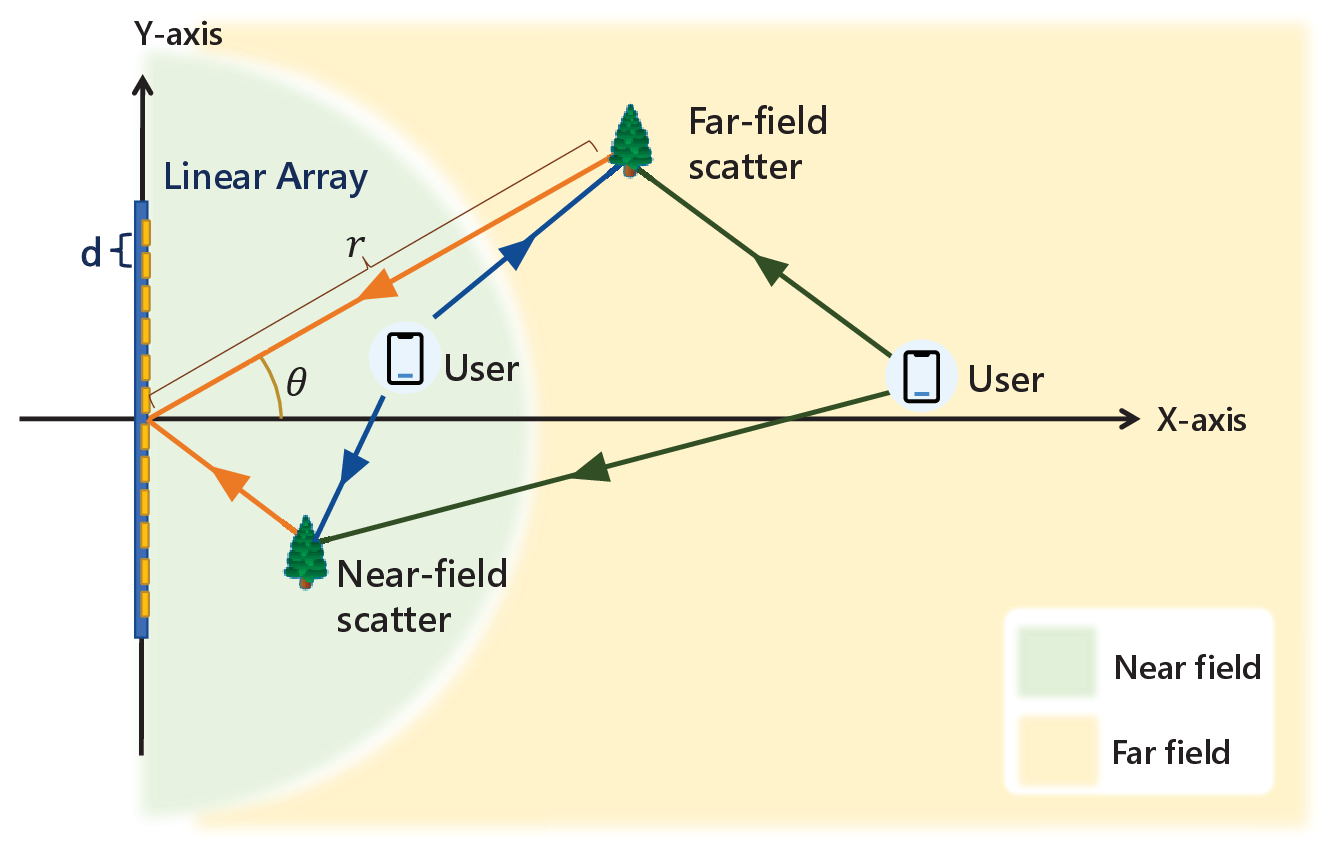}}
\vspace{-0.3cm}
\caption{The hybrid channel model with a linear holographic antenna array.}
\vspace{-0.5cm}
\label{scenario}
\end{figure}

As shown in Fig.~\ref{scenario}, we consider the uplink communication in a multi-user holographic communication network, where the base station (BS) adopts an extremely large linear antenna array to communicate with multiple users, with the number of antenna elements and the element spacing denoted by $N$ and $d$, respectively. Moreover, we assume that the antenna elements are connected via $N_{RF}<N$ radio frequency (RF) chains such that the hybrid precoding scheme is employed at the BS.
The EM radiation field of the antenna array can be divided into the near field and far field, as indicated in Fig.~\ref{scenario}, and the boundary between these two fields depends on Rayleigh distance, which is positively correlated with the size of the antenna array \cite{channelmodel}. In our considered large antenna array system, Rayleigh distance is comparable to the cell radius~\cite{cellrange}. This thus leads to a \emph{hybrid-field} communication model, where the users and scatterers can locate in either the near field or the far field of the antenna array.
We assume that during the uplink channel estimation phase, communication resources are orthogonally assigned to different users for pilot signal transmission. Therefore, the channels for different users can be estimated independently. In the following, we only focus on the channel model and the channel estimation corresponding to \emph{one arbitrary user}.
\subsection{Hybrid-field Channel model}

Assume that the hybrid-field multipath channel from the considered user
to the antenna array consists of $K$ non-line-of-sight (NLOS) paths\footnote{For simplicity, The LOS path is not discussed. Since the user can locate in either the near field or the far field, the LOS path can be discussed in a similar way as the NLOS path.}, where the user and each of the scatterers can lie in either the near field or the far field of the antenna array. In the following, we refer to a path as a far-field (near-field) path if the scatterer corresponding to the path is in the far field (near field) of the antenna array. Among the $K$ paths, the numbers of the far-field and near-field paths from the user to the antenna array are denoted by $K_F$ and $K_N=K-K_F$, respectively. We first present the model for the far-field and near-field paths, respectively, which are then combined to obtain the overall hybrid-field channel.
For simplicity, this paper considers a 2D Cartesian coordinate system, where the y-axis is aligned with the linear antenna array. Besides, the location of the middle point of the antenna array is set to be $(0,0)$, which shows that the x-axis is mid-perpendicular to the antenna array, as depicted in Fig.~\ref{scenario}.


\subsubsection{Model for far-field paths}
For scatterers located in the far field of the antenna array, the EM wave received by the antenna array from the user can be approximated by the uniform plane wave. To embody this feature, the following model~\cite{channelmodel} can be utilized to describe the effect of the $k$-th far-field path on the transmitted EM signal, i.e.,
\begin{equation}
    \mathbf{h}^{F}_k = g_k \mathbf{a}(\theta_k),
\end{equation}
where $g_k$ is a complex factor describing the joint impact of the scattering and the channel fading. 
$\theta_k$ is the angle between the x-axis and the direction from the origin to the $k$-th scatterer, and $\mathbf{a}(\theta_k)$ represents the far-field steering vector towards $\theta_k$, i.e.,
\begin{equation}
    \mathbf{a}(\theta_k)= \frac{1}{\sqrt{N}}[1, e^{j\frac{2\pi d}{\lambda} \sin(\theta_k)}, ..., e^{j\pi \frac{2\pi(N-1)d}{\lambda} \sin(\theta_k)}]^{T}.
\end{equation}

\subsubsection{Model for near-field paths}
When considering scatterers located in the near field of the antenna array, the spherical wave model can describe the wavefront of EM waves more accurately compared with the plane wave. To capture this feature, the effect of the $k$-th near-field path on the transmitted EM signal is described as~\cite{nearfieldmodel}
\begin{equation}
        \mathbf{h}^{N}_k = g_k \mathbf{b}(\theta_k,r_k),
\end{equation}
where $r_k$ is the distance between the $k$-th scatterer and the orgin and $\mathbf{b}(\theta_k,r_k)$ is the near-field steering vector. Here, it can be expressed as
\begin{equation}
    \mathbf{b}(\theta_k,r_k) = \frac{1}{\sqrt{N}}{[e^{-j\frac{2\pi}{\lambda}(r_{1,k}-r_k)},...,e^{-j\frac{2\pi}{\lambda}(r_{N,k}-r_k)}]}^T,
\end{equation}
where $r_{n,k}$ is the distance between the $n$-th antenna element of the antenna array and the $k$-th scatterer. $r_{n,k}$ can be expressed as
\begin{equation}
r_{n,k}=\sqrt{(r_k \cos\theta_k)^2+(t_n d-r_k\sin\theta_k)^2}.\label{distance}
\end{equation}
where $t_n = \frac{2n-N+1}{2}$ and $(0, t_nd)$ is the coordinate of the $n$-th antenna element.
\subsubsection{Overall Hybrid-field Channel Model}
 By combining $K_{N}$ near-field path components and $K_{F}$ far-field path components, the hybrid-field multipath channel is modeled as
\begin{equation}
    \mathbf{h} = \sum_{k = 1}^{K_{F}} \mathbf{h}_k^{F}+\sum_{k= K_F+1 }^{K_F+K_N}\mathbf{h}_k^{N}.
\end{equation}

\subsection{Signal Model}
 During uplink channel estimation, the user continuously transmits pilot symbols to the BS for $Q$ time slots\footnote{We assume that the channel coherence time is longer than the $Q$ time slots, so that the channel state information remains static during channel estimation.}. The received pilot $\mathbf{y}_q \in \mathbb{C}^{N_{RF}}$ of the BS at the $q$-th time slot is denoted as
\begin{equation}
    \mathbf{y}_q = \mathbf{W}_{q} \mathbf{h} x_q+\mathbf{W}_{q} \bm{n}_q,
\end{equation}
where $x_q$ is the transmitted pilot signal.
$\mathbf{W}_{q} \in \mathbb{C}^{N_{RF} \times {N}}$ is the hybrid beamforming matrix.
It is noted that $N_{RF} \ll N$ since the antenna array is featured with a great number of antenna elements. $\bm{n}_q \sim \mathcal{CN}(0, \sigma^{2} \mathbf{I}_{N \times 1})$ is the zero-mean complex Gaussian additive noise vector. Because no prior channel state information (CSI) is available in the channel estimation procedure, the phase shift for each element in the hybrid beamforming matrix is one-bit quantized and is randomly chosen with equal probability~\cite{Polar Domain}. 
The received pilot signal at the BS over the entire $Q$ time slots can be written as
\begin{equation}
    \mathbf{y} = \mathbf{W} \mathbf{h} x+\mathbf{W} \mathbf{n},
\end{equation}
where $\mathbf{y} = [\mathbf{y}_1^T, \mathbf{y}_2^T,... \mathbf{y}_Q^T]^T$,  
$\mathbf{W} = [\mathbf{W}_{1}^T, \mathbf{W}_{2}^T ..., \mathbf{W}_{ Q}^T]^T$, 
$\bm{n}=[\bm{n}_1^T, \bm{n}_2^T,...\bm{n}_Q^T]^T$ and $(\cdot)^T$ denotes the transpose operator.
\section{Channel Representation in Joint Angular-Polar Domain}
In existing works, channels are generally transformed to the angular domain~\cite{farfieldtransform} and the polar domain~\cite{Polar Domain} when all the scatterers and users are in the far field and the near field, respectively, so that sparse channel representations can be formulated. Based on such representations, channel estimation algorithms that reduce the pilot overhead are designed. However, in the hybrid-field case, neither the angular-domain transform nor the polar-domain transform is applicable because the sparsity cannot be guaranteed. Hence, we propose the joint angular-polar domain channel transform, based on which the sparsity is partially reserved and a low pilot overhead channel estimation method can be developed.

Note that the far-field channel is the weighted sum of steering vectors $\mathbf{a}(\theta)$ at the directions of the scatterers. A transform matrix $\mathbf{F}_{f}$ is designed to transform the channel $\mathbf{h}$ to its representation $\mathbf{h}^{a}$ in the angular domain, denoted as
\begin{equation}
    \mathbf{h} = \mathbf{F}_{f} \mathbf{h}^{a},\label{angular}
\end{equation}
where $\mathbf{F}_{f} = [\mathbf{a}(\theta_1), \mathbf{a}(\theta_2), ..., \mathbf{a}(\theta_N)]$ and $\theta_n = \arcsin{\frac{2n-1-N}{N}}, n=1,2...N$. 

Same as the angular domain, a transform matrix $\mathbf{F}_{n}$ comprised of near-field steering vectors is designed to transform the channel $\mathbf{h}$ to its representation in the polar domain \cite{Polar Domain}, which is denoted as
\begin{equation}
    \mathbf{h} = \mathbf{F}_{n} \mathbf{h}^{p}.\label{polar}
\end{equation} 
$\mathbf{F}_{n}$ is obtained by sampling both angles and distances in the space. Specifically,  $\mathbf{F}_{n} = [\mathbf{F}_{n,1}, \mathbf{F}_{n,2}... \mathbf{F}_{n,S}]$, where $\mathbf{F}_{n,s} = [\mathbf{b}(\theta_1, r_{s,1}), \mathbf{b}(\theta_2, r_{s,2})... \mathbf{b}(\theta_N, r_{s,N})], s = 1,2... S$. The design of $\{\theta_n, r_{s,n}\}$ 
can be found in \cite{Polar Domain}. 

Considering both the far-field path components and the near-field path components in the hybrid-field channel, its representation $\mathbf{h}^{j}$ in the joint angular-polar domain is denoted as
\begin{equation}
\mathbf{h} = \mathbf{F}_{j}\mathbf{h}^{j},\label{joint}
\end{equation}
where the transform matrix $\mathbf{F}_{j}$ is the combination of the angular-domain transform matrix $\mathbf{F}_f$ and the polar-domain transform matrix $\mathbf{F}_n$ and is defined as
\begin{equation}
    \mathbf{F}_{j} = [\mathbf{F}_{f}, \mathbf{F}_{n,1}, \mathbf{F}_{n,2}... \mathbf{F}_{n,S}].
    \label{transmatrix}
\end{equation}


\section{Power Diffusion in the Joint Angular-Polar Domain} \label{powerdiffusionsec}
Power diffusion is the phenomenon that in the channel transformation result, the power corresponding to one near-field (far-field) path component may spread to other far-field (near-field) paths. To demonstrate power diffusion, we consider the joint-angular-polar-domain representation of a channel consisting of a far-field path and a near-field path, as shown in Fig.~\ref{powerdiffusion}. For simplicity, the near-field space is sampled with one distance so that the transform matrix is denoted as $\mathbf{F}_{j} = [\mathbf{F}_{f}, \mathbf{F}_{n,1}]$. The left part and the right part of the figure are the transform result based on $\mathbf{F}_{f}$ and $\mathbf{F}_{n,1}$, respectively. As shown in Fig.~\ref{powerdiffusion}, the power of the near-field path is not only concentrated in one steering vector in $\mathbf{F}_{n,1}$ but also spreads across multiple steering vectors in $\mathbf{F}_{f}$. Similarly, one steering vector in $\mathbf{F}_{f}$ and multiple steering vectors in $\mathbf{F}_{n,1}$  should be jointly applied to describe the far-field path. This power diffusion effect indicates that the sparsity of the hybrid-field channel does not hold in either the angular domain or the polar domain. However, the peak values corresponding to both paths still exist in the overall transform result, indicating that the channel sparsity is partially preserved in the joint angular-polar domain.
  
\begin{figure}[t]
\centerline{\includegraphics[width=8.6cm,height=4.5cm]{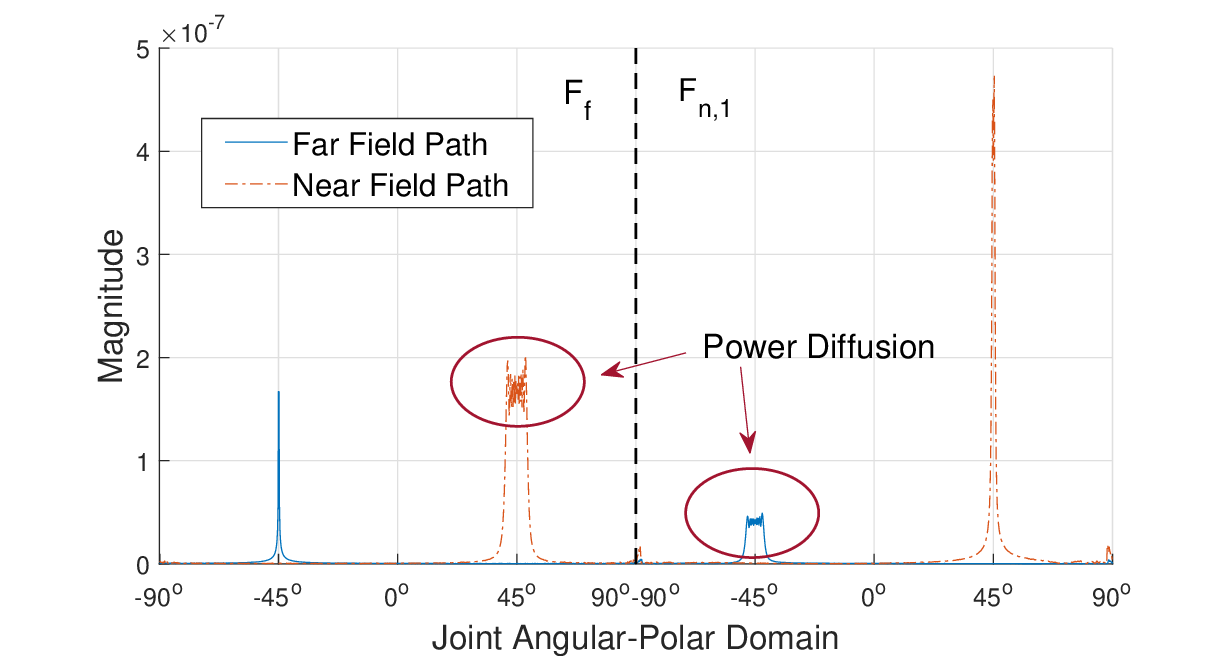}}

\caption{Power diffusion effect in the joint angular-polar domain.}
\vspace{-0.6cm}
\label{powerdiffusion}
\end{figure}

The reason for the power diffusion effect is that the coherence of two different steering vectors is not always zero, which is denoted by the following lemma.
\begin{lemma}
$\exists \{\theta_p,r_p\} \neq \{\theta_q,r_q\}, 
 \text{s.t.}$ $ \rm coherence$ $|b(\theta_p,r_p)^Hb(\theta_q,r_q)| \neq 0, $ $\rm coherence$ $|b(\theta_p,r_p)^Ha(\theta_q)| \neq 0,$ $\rm coherence$ $|a(\theta_p)^Hb(\theta_q, r_q)| \neq 0$
where $|\cdot|$ denotes the absolute operator. 
\end{lemma}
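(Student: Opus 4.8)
The plan is to avoid evaluating any of the three inner products in closed form and instead exploit two structural facts: every steering vector is normalized to unit norm (each carries the factor $1/\sqrt{N}$), and both $\mathbf{a}(\theta)$ and $\mathbf{b}(\theta,r)$ depend continuously on their parameters. The guiding observation is that the self-coherence of any unit vector equals $1$, so by continuity the coherence between a steering vector and a slightly perturbed but \emph{distinct} steering vector must stay strictly positive. I would therefore reduce the whole lemma to a continuity argument together with one limiting relation that bridges the near field and the far field.

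First I would handle the near-field/near-field term $|\mathbf{b}(\theta_p,r_p)^H\mathbf{b}(\theta_q,r_q)|$. Fixing $(\theta_p,r_p)$, I view the map $(\theta_q,r_q)\mapsto|\mathbf{b}(\theta_p,r_p)^H\mathbf{b}(\theta_q,r_q)|$ as continuous, taking the value $\|\mathbf{b}(\theta_p,r_p)\|^2=1$ at $(\theta_q,r_q)=(\theta_p,r_p)$. Continuity then yields a punctured neighborhood on which this coherence exceeds $1/2$, and choosing any $(\theta_q,r_q)\neq(\theta_p,r_p)$ inside it exhibits the required distinct pair with nonzero coherence.

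For the cross terms $|\mathbf{b}(\theta_p,r_p)^H\mathbf{a}(\theta_q)|$ and $|\mathbf{a}(\theta_p)^H\mathbf{b}(\theta_q,r_q)|$, which coincide in magnitude under conjugation so that one case suffices, I would connect the two domains through the large-distance limit. A first-order Fresnel expansion of $r_{n,k}$ in \eqref{distance} shows $r_{n,k}-r_k\to -t_n d\sin\theta_k$ as $r_k\to\infty$, so that $\mathbf{b}(\theta,r)\to e^{j\gamma}\mathbf{a}(\theta)$ for a global phase $\gamma$ that does not depend on the antenna index and hence does not affect the coherence magnitude. Taking $\theta_q=\theta_p$ and letting $r_q\to\infty$ then gives $|\mathbf{a}(\theta_p)^H\mathbf{b}(\theta_p,r_q)|\to|\mathbf{a}(\theta_p)^H\mathbf{a}(\theta_p)|=1$, so for all sufficiently large $r_q$ this coherence is nonzero; since a far-field parameter and a near-field parameter are of different types, the pair is automatically distinct.

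The main obstacle is that the near-field phase $r_{n,k}-r_k$ in \eqref{distance} is a square-root, hence nonlinear, function of the antenna index, so these inner products do not collapse to a clean Dirichlet-kernel form as they do in the pure far-field case; a direct attack would force one into Fresnel-integral-type sums whose nonvanishing is awkward to certify termwise. The continuity-plus-limit argument sidesteps this entirely, at the price of proving only the \emph{existence} of nonzero-coherence pairs rather than characterizing them—but existence is exactly what the lemma asserts and all that is needed to justify the \emph{power diffusion} effect.
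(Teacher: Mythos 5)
Your argument is correct but takes a genuinely different route from the paper. The paper's proof of Lemma~1 is a bare numerical witness: it fixes $N=200$, $\lambda=0.01$\,m, $\{\theta_p,r_p\}=\{40^{\circ},400\,\mathrm{m}\}$, $\{\theta_q,r_q\}=\{45^{\circ},40\,\mathrm{m}\}$ and reports the three computed coherences ($0.0488$, $0.0485$, $0.0489$), all nonzero. You instead prove existence abstractly: continuity of $(\theta_q,r_q)\mapsto|\mathbf{b}(\theta_p,r_p)^H\mathbf{b}(\theta_q,r_q)|$ plus the value $1$ on the diagonal handles the near-field/near-field term, and the limit $\mathbf{b}(\theta,r)\to e^{j\gamma}\mathbf{a}(\theta)$ as $r\to\infty$ (which the Fresnel expansion of \eqref{distance} does justify, with $\gamma$ independent of the antenna index) handles the cross terms. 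Two remarks. First, the lemma as written asks for a \emph{single} pair satisfying all three inequalities, whereas you exhibit different witness pairs for the near--near term and the cross terms; this is trivially repaired (e.g., take $\theta_q=\theta_p$ with $r_p\neq r_q$ both sufficiently large, or $(\theta_q,r_q)$ close to $(\theta_p,r_p)$ with $r_p$ large, so that all three coherences are simultaneously close to $1$), but you should close that loop explicitly. Second, and more substantively, your witnesses are perturbatively close or asymptotically identical steering vectors, so the nonzero coherence you certify is of the trivial kind; the paper's example deliberately uses well-separated parameters ($5^{\circ}$ apart, $400$\,m versus $40$\,m) because the power diffusion effect the lemma is meant to motivate concerns non-negligible coherence between \emph{distant} dictionary atoms. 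In short, your proof establishes the literal statement with more rigor than an unexhibited numerical computation, but it supports the subsequent power-diffusion narrative less directly than the paper's concrete example does.
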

\begin{proof}
See Appendix~\ref{proofL1}.
\end{proof}

\vspace{-0.3cm}
 In the expression of coherence $|b(\theta_p,r_p)^Hb(\theta_q,r_q)|$ or $|b(\theta_p,r_p)^Ha(\theta_q)|$ or $|a(\theta_p)^Hb(\theta_q,r_q)|$, $b(\theta_p,r_p)$ or $a(\theta_p)$ represents a steering vector in the transform matrix and $b(\theta_q,r_q)$ or $a(\theta_q)$ represents a near-field or far-field path component. The transform of the channel to the joint angular-polar domain is performed with such computing of coherence. Due to the nonzero of coherence, steering vectors $b(\theta_p,r_p)$ or $a(\theta_p), \theta_p \neq \theta_q$  besides $a(\theta_q)$ are used to describe the channel component from the direction $\theta_q$, which matches the power diffusion effect in Fig.~\ref{powerdiffusion}. An approximation of the coherence calculation is given below. 
\begin{theorem}
    The coherence $\mu$ of two steering vectors $\{\mathbf{b}(\theta_p, r_p), \mathbf{b}(\theta_q,r_q)\}$ or $\{\mathbf{b}(\theta_p, r_p), \mathbf{a}(\theta_q)\}$ can be approximated as
    \begin{equation}
        \mu \approx \mathcal{F}(\beta, \rho) = |\int_{\frac{1-N}{2N}-\frac{\beta}{2N\rho d}}^{\frac{N-1}{2N}-\frac{\beta}{2N\rho d}}e^{-jkd^2\rho (Nx)^2}dx|,
    \label{pd}\end{equation}
 where $\beta = \sin\theta_p - \sin\theta_q$. $\rho = \frac{1-\sin^2\theta_p}{2r_p} - \frac{1-\sin^2\theta_q}{2r_q}$ for the case of $\{\mathbf{b}(\theta_p, r_p), \mathbf{b}(\theta_q,r_q)\}$, $\rho = \frac{1-\sin^2\theta_p}{2r_p}$ for the case of $\{\mathbf{b}(\theta_p, r_p)$ and $\rho = -\frac{1-\sin^2\theta_q}{2r_q}$ for the case of $\{\mathbf{a}(\theta_p), \mathbf{b}(\theta_q, r_q)\}$, $ \rho \neq 0$.
\end{theorem}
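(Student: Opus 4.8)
The plan is to collapse the inner product of the two steering vectors into a single one-dimensional sum of complex exponentials whose phase is quadratic in the antenna index $t_n$, and then to replace that sum by the stated integral via a Riemann-sum approximation. I would treat all three cases at once, observing that each vector contributes a phase that is linear plus quadratic in $t_n$; the far-field vector $\mathbf{a}(\theta)$ is simply the special case in which the quadratic part vanishes, which explains why $\rho$ retains only a single near-field term in the two mixed cases.

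First I would apply the Fresnel (second-order) expansion to the distance in \eqref{distance}. Writing $r_{n,k}=\sqrt{r_k^2-2t_ndr_k\sin\theta_k+t_n^2d^2}$ and using $\sqrt{1+\epsilon}\approx 1+\tfrac{\epsilon}{2}-\tfrac{\epsilon^2}{8}$ up to order $d^2/r_k$ gives
\begin{equation}
r_{n,k}-r_k \approx -t_nd\sin\theta_k+\frac{t_n^2d^2(1-\sin^2\theta_k)}{2r_k}.
\end{equation}
With $k=2\pi/\lambda$ the wavenumber, the $n$-th entry of $\mathbf{b}(\theta_k,r_k)$ is therefore $\tfrac{1}{\sqrt N}e^{jkt_nd\sin\theta_k}\,e^{-jkt_n^2d^2(1-\sin^2\theta_k)/(2r_k)}$, while the $n$-th entry of $\mathbf{a}(\theta_k)$ is $\tfrac{1}{\sqrt N}e^{jkt_nd\sin\theta_k}$ up to a global phase irrelevant to the modulus.

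Second, I would form the inner product. In each case the linear phases merge into $-jkt_nd(\sin\theta_p-\sin\theta_q)=-jkt_nd\beta$ and the quadratic phases merge into $+jkt_n^2d^2\rho$ with precisely the $\rho$ stated for that case, so that
\begin{equation}
\mu=\Big|\frac{1}{N}\sum_n e^{-jkt_nd\beta}e^{jkt_n^2d^2\rho}\Big|.
\end{equation}
Because $|z|=|\bar z|$, I may conjugate the entire sum to flip both signs, matching the $e^{-jkd^2\rho(\cdot)^2}$ form of the statement. I would then complete the square, using $t_nd\beta-t_n^2d^2\rho=-d^2\rho\big(t_n-\tfrac{\beta}{2d\rho}\big)^2+\tfrac{\beta^2}{4\rho}$; the trailing term is constant in $n$ and factors out with unit modulus. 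Finally, since $t_n$ is spaced by $1$ over $\{\tfrac{1-N}{2},\dots,\tfrac{N-1}{2}\}$, the substitution $x=t_n/N-\beta/(2Nd\rho)$ (so $Nx=t_n-\tfrac{\beta}{2d\rho}$, $\Delta x=1/N$) sends the endpoints to the stated limits and turns $\tfrac{1}{N}\sum_n$ into $\int\,dx$, producing $\mathcal{F}(\beta,\rho)$.

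The step I expect to be the main obstacle is quantifying the two approximations that the $\approx$ conceals. Dropping the higher-order Fresnel terms is standard, but the sum-to-integral replacement of a quadratic-phase (chirp) sum is delicate because the integrand oscillates and a crude Riemann bound is loose. I would argue the error is negligible in the regime of interest, where $d$ is on the wavelength scale and $r_k$ exceeds the Fresnel distance, so that the total phase increment between adjacent elements stays small and the chirp is finely sampled.
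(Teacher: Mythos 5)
Your proposal follows essentially the same route as the paper's proof: a second-order (Fresnel) Taylor expansion of $r_{n,k}$, merging the linear and quadratic phases into the $\beta$ and $\rho$ terms, completing the square and discarding the constant unit-modulus factor, converting the sum to a Riemann integral for large $N$, and treating $\mathbf{a}(\theta)$ as the $r\to\infty$ limit of $\mathbf{b}(\theta,r)$ to cover the mixed cases. Your added remarks on quantifying the two approximation errors go slightly beyond what the paper does, but the argument is the same.
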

\begin{proof}
See Appendix~\ref{proofP1}.
\end{proof}

Due to the power diffusion effect, the performance of OMP methods based on solely angular-domain or polar-domain channel transform is deteriorated. The basic idea of OMP is to search for the peak values in the transform result. If only the angular-domain channel transform is considered, which is illustrated in the left part of Fig.~\ref{powerdiffusion}, multiple values can be falsely detected as the peak for the near-field path because of the power diffusion effect. The same issue exists for the polar-domain channel transform. 
However, in the joint angular-polar domain, peak values exist for both paths of the hybrid-field channel, which are larger than the magnitude of the path's power diffusion. If the steering vectors of the two peaks are successfully detected, the range of power diffusion can also be computed with (\ref{pd}). The detected path and its corresponding power diffusion can provide accurate CSI.

\section{Power Diffusion-based Compressed Sensing Algorithm Design}
Based on the classic far-field OMP channel estimation algorithm \cite{classicomp}, a new hybrid-field OMP channel estimation algorithm considering the aforementioned power diffusion effect is proposed to improve the estimation accuracy. The proposed power diffusion-based OMP (PD-OMP) channel estimation algorithm is given in \textbf{Algorithm 1}. The main idea of PD-OMP 
is to perform the following three procedures in iteration: (1) searching for the steering vector that is most correlated with the residual pilot signal; (2) calculating the range of power diffusion; (3) eliminating the effect of the detected steering vectors in the residual signal.

In the beginning, to capture both the far-field and near-field features of the hybrid-field multipath channel, we generate the transform matrix $\mathbf{F}_j$ for the joint angular-polar domain according to its definition in (\ref{transmatrix}).
In step 2, we initialize the residual signal $\mathbf{R}$ as the received pilot signal $\mathbf{y}$ and the support set $\Gamma$ as the empty set. In step 3, the measurement matrix is set as $\mathbf{\Psi=WF}_{j}$.

\setlength{\textfloatsep}{0cm}
\vspace{0.1in}
\begin{algorithm}[htp]
\label{alg:MARL}

\caption{PD-OMP based Hybrid-field Channel Estimation} 
\hspace*{0.02in} {\bf Input:} Received pilot signal $\mathbf{y}$, power diffraction threshold $\alpha$, number of sampled distances $S$ of $\mathbf{F}_n$, number of paths $K$, the beamforming matrix $\mathbf{W}$. 
\begin{algorithmic}[1]
\State \textbf {Generate} the hybrid-field transform matrix $\mathbf{F}_{j}$ with $S$ based on (\ref{transmatrix}).
\State \textbf {Initialize} the support set $\Gamma = \{ \emptyset \}$ and the residue $\mathbf{R} = \mathbf{y}$.
\State \textbf{Set} the equivalent measurement matrix as $\mathbf{\Psi=WF}_{j}$.
\For{$i$ $= 1, 2, ..., K$}
    \State \textbf{Detect} a path $l^{*}$ based on (\ref{detect}) and obtain the corresponding direction and distance  $\{ \theta_{l^{*}}, r_{l^{*}}\}$. 
    \State \textbf{Compute} the power diffusion effect $\mathcal{F}$ with $\{ \theta_{l^{*}}, r_{l^{*}}\}$ based on (\ref{pd}).
    \State \textbf{Generate} the support set $\Gamma_{l^{*}}$ for path $l^{*}$ under the criterion of (\ref{criterion}).
    \State \textbf{Update} the support set $\Gamma$ = $\Gamma \cup \Gamma_{l^{*}}$.
    \State \textbf{Update} the residue as $\mathbf{R} = {\mathbf{y}} - \mathbf{\Psi}(:,\Gamma)\mathbf{\Psi}^{\dagger}(:,\Gamma){\mathbf{y}}$. 
\EndFor
\State \textbf{Compute} the estimated CSI $\hat{\mathbf{h}} = \mathbf{F}_{j}\mathbf{\Psi}^{\dagger}(:,\Gamma)\mathbf{y}$.
\end{algorithmic}\label{algo}
\hspace*{0.02in} {\bf Output:} 
The estimated CSI $\hat{\mathbf{h}}$.
\end{algorithm}

Then $K$ times of iteration are performed to find the steering vectors corresponding to each path component from the user to the antenna array. 
Specifically, in step 5, we first transform the residual signal to the joint angular-polar domain and then detect the strongest path $l^{*}$ as 
\begin{equation}
    l^{*} = \argmax_{l}|\mathbf{\Psi}(:,l)^H \mathbf{R}|^2, \label{detect}
\end{equation}
which indicates that the residual signal $\mathbf{R}$ has the strongest correlation with the $l^{*}$-th steering vector in $\mathbf{F}_j$. The corresponding direction and distance $\{ \theta_{l^{*}}, r_{l^{*}}\}$ associated with the $l^{*}$-th steering vector are obtained. In step 6, the power diffusion effect of the detected path can be computed with the integral approximation of coherence in Eq. (\ref{pd}). Specifically, the coherence of the $l^{*}$-th steering vector, i.e., the detected path,  and each steering vector in the transform matrix is computed. In step 7, the set $\Gamma_{l*}$ of the range of power diffusion is generated by the criterion as below,
\begin{equation}
    \Gamma_{l^{*}} = \{l | \mathcal{F}(\beta_l,\rho_l) \geq \alpha \},\label{criterion}
\end{equation}
where $\beta_l = \sin\theta_l-\sin\theta_{l^*}$,  $\rho_l = \frac{1-\sin^2\theta_l}{2r_l} - \frac{1-\sin^2\theta_{l^*}}{2r_{l^*}}$ and $\{\theta_l, r_l\}$ is associated with the $l$-th steering vector\footnote{If either the $l$-th or the $l^{*}$-th steering vector is a far-field steering vector, $r_l=\infty$ or $r_l^*=\infty$, respectively. If both the $l$-th and $l^{*}$-th steering vectors are far-field steering vectors, the $(l^{*}-1)$-th and $(l^{*}+1)$-th steering vectors are included in the power diffusion range~\cite{farfieldtransform} without computing $\mathcal{F}(\beta_l,\rho_l)$  as $\mathcal{F}(\beta_l,\rho_l)$ is incalculable for two far-field steering vectors.}. 
Criterion (\ref{criterion}) indicates that the coherence of each steering vector in the range of power diffusion and the detected $l^{*}$-th steering vector is no less than an adjustable parameter $\alpha$, which satisfies $0<\alpha\leq 1$. A small $\alpha$ indicates that a big range of power diffusion is considered. The effect of $\alpha$ on the performance of the PD-OMP will be discussed in the next section. The overall support set is updated with the union of the $\Gamma_{l^{*}}$ in step 8. The residue signal is updated by removing the projection of the detected paths in the received pilot signal with the least square method in step 9. Finally, the iteration is terminated and the hybrid field channel is recovered as 
$\hat{\mathbf{h}} = \mathbf{F}_j\mathbf{\Psi}^{\dagger}(:,\Gamma)\mathbf{y}.$

\section{Simulation Results}
In this section, we evaluate the performance of the proposed channel estimation algorithm PD-OMP in terms of the normalized mean square error (NMSE). NMSE is defined as $\mathbb{E}\left\{\frac{\Vert \mathbf{\hat{h}} - \mathbf{h}\Vert_2^2}{\Vert \mathbf{h}\Vert_2^2}\right\}$, which represents the expectation of the relative estimation error.
In the simulation, we consider a holographic communication scenario where an antenna array at the working frequency of $30$ GHz is equipped with 200 antenna elements, and thus, the boundary between the near field and far field is approximately 200m~\cite{channelmodel}. The distance and angle of each user and scatterer to the origin satisfy the uniform distribution and are within the range of $(40m, 400m)$ and $(-60^{\circ}, 60^{\circ})$, respectively. 
$g_k$ satisfies circularly-symmetric complex Gaussian distribution.
Each element of the beamforming matrix $\mathbf{W}$ is randomly chosen from $\{\frac{1}{\sqrt{N}},-\frac{1}{\sqrt{N}}\} $ with equal probability. Other simulation parameters are listed in \textbf{Table 1}. To demonstrate the effectiveness of the proposed algorithm, we also compare it against the basic MMSE algorithm and four existing hybrid-field channel estimation algorithms, i.e.,
\begin{enumerate}    
    \item \emph{HF-OMP} \cite{hybridce1}: An OMP-based estimation method that requires the numbers of near-field paths and far-field paths as prior information and estimates the far-field and near-field path components separately. 
    \item \emph{HF-NPD OMP}: An OMP-based estimation method that only applies the joint angular-polar domain transformation (\ref{joint}) and does not consider the power diffusion effect.
    \item\emph{P-OMP} \cite{Polar Domain}: An OMP-based method that only applies the polar-domain transformation (\ref{polar}) and does not consider the power diffusion effect.
    \item\emph{A-OMP} \cite{classicomp}: An OMP-based method that only applies the angular-domain transformation (\ref{angular})  and does not consider the power diffusion effect.
    \item \emph{MMSE}: An estimation method applying the second-order statistics of the CSI to minimize the mean square error.
\end{enumerate}
\begin{table}[t]\footnotesize
  \centering
  \caption{Simulation Parameters} \label{}
  \begin{tabular}{cc}
     \hline
     Parameter & Value\\
     \hline   
      Number of RF chain $N_{RF}$ & $10$ \\
      Number of users  & 5\\
      Number of paths $K$ & 5 \\
      Number of sampled distances $S$   & 4 \\
     \hline
  \end{tabular}
\end{table}

Fig.~\ref{snr} demonstrates the NMSE performance of different algorithms with the increase of SNR. Pilot length $Q = 10$ and $\alpha$ with minimum NMSE is selected for each SNR. 
Compared with the HF-OMP algorithm, which requires the numbers of near-field and far-field paths as prior knowledge, PD-OMP can estimate the channel more accurately without the prior knowledge of path distribution, which demonstrates the effectiveness of the proposed algorithm. PD-OMP also outperforms the A-OMP and P-OMP
methods, because PD-OMP applies the joint angular-polar domain transform matrix, which is capable of capturing both the far-field and near-field features of the channel. From Fig.~\ref{snr}, we can also find that the superiority of PD-OMP over the benchmark algorithms is more obvious when the SNR is closer to 20 dB.
This is because, at a high SNR, the range of power diffusion can be estimated more accurately, which is then utilized by our proposed algorithm to compensate for the performance degradation caused by the power diffusion effect. In contrast, none of the existing algorithms consider the power diffusion effect.   

\begin{figure*}[ht]
\begin{minipage}[t]{0.3\linewidth}
\centering

\vspace{-4mm}
\includegraphics[width=1\textwidth]{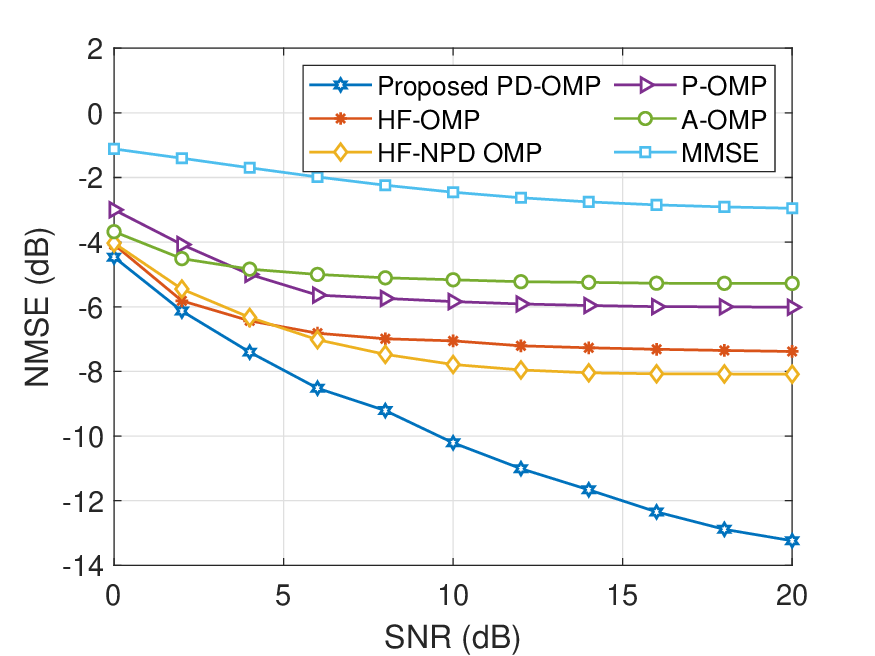}
\vspace{-8mm}
\caption{The NMSE performance of different algorithms with the increase of SNR. }
\label{snr}
\end{minipage}
\hfill
\begin{minipage}[t]{0.31\linewidth}
\centering

\vspace{-4mm}
\includegraphics[width=1\textwidth]{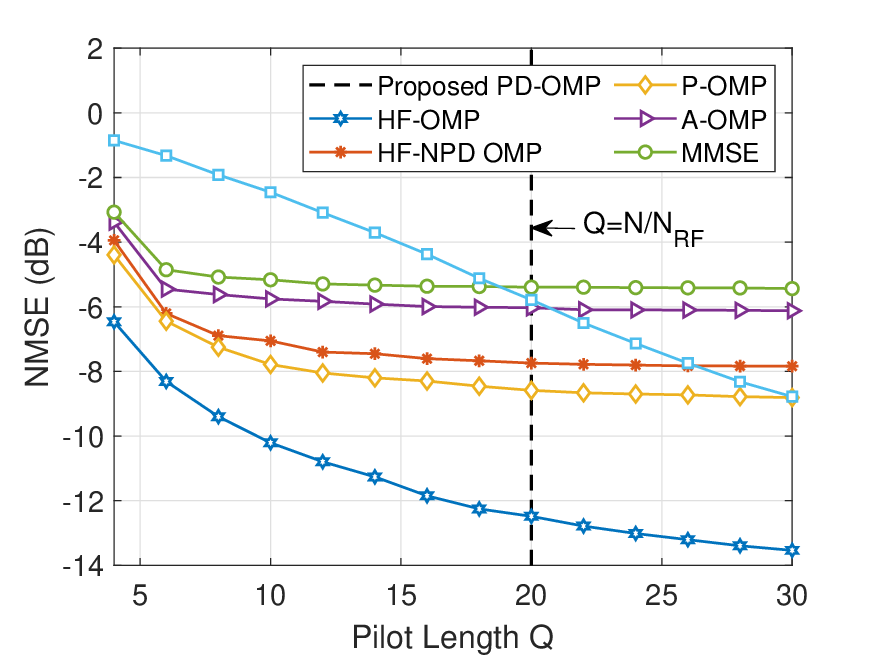}
\vspace{-8mm}
\caption{The NMSE performance of different algorithms with the increase of pilot length. }
\label{p}
\end{minipage}
\hfill
\begin{minipage}[t]{0.32\linewidth}
\centering

\vspace{-4mm}
\includegraphics[width=1\textwidth]{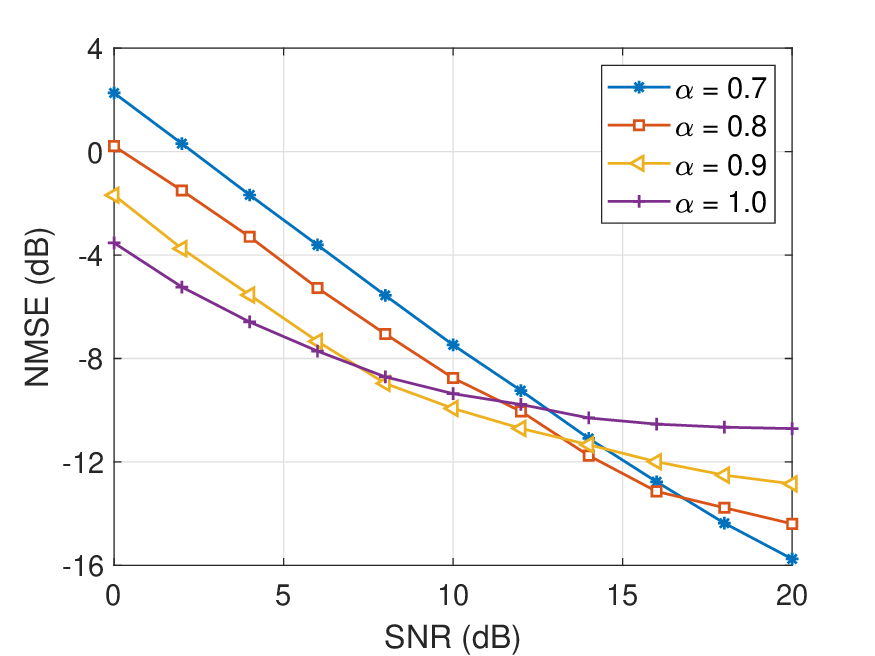}
\vspace{-8mm}
\caption{The NMSE performance of PD-OMP of different $\alpha$ with the increase of SNR.}
\label{alpha}
\end{minipage}
\vspace{-6mm}
\end{figure*}

Fig.~\ref{p} presents the NMSE performance of different algorithms versus pilot length $Q$. SNR $ = 10$ dB and $\alpha$ with minimum NMSE is selected for each $Q$. PD-OMP can achieve the lowest NMSE among all comparing hybrid-field channel estimation methods for different $Q$.  
The NMSE performance of all OMP-based algorithms first decreases and then tends to stabilize as $Q$ increases. This is because OMP is a compressed sensing algorithm able to recover high-dimension information from a low-dimension signal. On the contrary, the NMSE performance of MMSE keeps decreasing as $Q$ increases since MMSE requires the dimension of the received signal and estimated information to be similar to reach a low NMSE. Therefore, the PD-OMP method is preferred in the case where the pilot length is small, i.e., $Q < \frac{N}{N_{RF}}$.     


As shown in Fig.~\ref{alpha}, we present how the NMSE performance of PD-OMP changes with SNR, where the influence of the power diffusion threshold $\alpha$ is studied.  $\alpha$ satisfies $0 < \alpha \leq 1$ and a smaller $\alpha$ indicates that a larger range of power diffusion is included for each estimated path. With the increase of SNR, PD-OMP with a decreased $\alpha$ can achieve the lowest NMSE. Note that the optimal NMSE is achieved only if a proper range of power diffusion is introduced in the algorithm. When SNR is high, the range of power diffusion is computed with high precision, and the estimation error brought by the power diffusion effect is eliminated more clearly with a smaller $\alpha$. Nevertheless, in the case of a low SNR, a falsely estimated range of power diffusion are likely to be introduced into the support set $\Gamma$, worsening the NMSE performance. Thus, a high $\alpha$ is advantageous when the SNR is low, as the range of power diffusion is limited. Fig.~\ref{alpha} also reveals that the SNR of the wireless communication scenario can be utilized to choose a proper range of power diffusion to improve the accuracy of channel estimation.

\vspace{-0.3cm}
\section{Conclusion}
\vspace{-0.3cm}
In this paper, we developed a channel estimation scheme for the hybrid-field multipath channel in holographic communications. Specifically, we first proposed the joint angular-polar domain channel transform, based on which we analyzed the power diffusion effect of the hybrid-field channel. Then we designed the channel estimation algorithm PD-OMP, which originated from OMP, and introduced the power diffusion to improve estimation accuracy. Simulation results showed that: 1) PD-OMP outperformed current state-of-the-art hybrid-field channel estimation methods under different SNRs. 2) PD-OMP achieved a $57.73\%$ estimation error reduction compared with MMSE when the ratio of the pilot length to the number of antenna elements is $0.04$. 3) The SNR of the holographic communication scenario could serve as useful information for setting the considered range of power diffusion in the algorithm to improve estimation accuracy.
\vspace{-0.3cm}
\begin{appendices}
\section{proof of Lemma 1} \label{proofL1}

A example for proving the lemma is set as $N=200, \lambda=0.01m, \{\theta_p,r_p\}= \{ 40^{\circ},400m\}, \{\theta_q,r_q\}=\{45^{\circ},40m\}.$ The coherence $|b(\theta_p,r_p)^Hb(\theta_q,r_q)| = 0.0488$, $|b(\theta_p,r_p)^Ha(\theta_q)| = 0.0485$,
$|a(\theta_p)^Hb(\theta_q, r_q)| = 0.0489$.
Hence, the coherence is nonzero.
\vspace{-0.4cm}
\section{proof of Proposition 1} \label{proofP1}
In the case of computing the coherence of two near-field steering vectors, 
\begin{align}
    &\mu = |b(\theta_p,r_p)^Hb(\theta_q,r_q)| \nonumber \\ 
    &\overset{(a)}{\approx} \frac{1}{N}|\sum_{t_n} e^{j\frac{2\pi}{\lambda} t_n d (\sin\theta_p - \sin\theta_q) + j\frac{2\pi}{\lambda}t_n^2d^2(\frac{1-\sin^2\theta_p}{2r_p} - \frac{1-\sin^2\theta_q}{2r_q})}|\nonumber \\ 
    &= \frac{1}{N}|\sum_{n=\frac{1-N}{2}}^{\frac{N-1}{2}}e^{j\frac{2\pi}{\lambda}nd\beta-j\frac{2\pi}{\lambda}n^2d^2\rho}|\nonumber \\
    &= \frac{1}{N}|\sum_{n=\frac{1-N}{2}}^{\frac{N-1}{2}}e^{-j\frac{2\pi}{\lambda}d^2\rho((n-\frac{\beta}{2\rho d})^2-(\frac{\beta}{2\rho d})^2)}| \nonumber \\
    &\overset{(b)}{=} \frac{1}{N}|\sum_{n=\frac{1-N}{2}-\frac{\beta}{2\rho d}}^{\frac{N-1}{2}-\frac{\beta}{2\rho d}}e^{-j\frac{2\pi}{\lambda}d^2\rho n^2}| \nonumber\\
    &= |\sum_{n=\frac{1-N}{2N}-\frac{\beta}{2N\rho d}, \Delta n =\frac{1}{N}}^{\frac{N-1}{2N}-\frac{\beta}{2N\rho d}}e^{-jkd^2\rho (Nn)^2}\frac{1}{N}| \label{mc2} \\
     &\overset{N \rightarrow \infty}{\approx} |\int_{\frac{1-N}{2N}-\frac{\beta}{2N\rho d}}^{\frac{N-1}{2N}-\frac{\beta}{2N\rho d}}e^{-jkd^2\rho (Nx)^2}dx|=  \mathcal{F}(\beta, \rho).\nonumber
\end{align}
where the approximation $(a)$ is obtained by performing the second-order Taylor Expansion to the distance between the $p$-th scatterer and the $n$-th antenna element given in Eq. (\ref{distance}), which is denoted as
\begin{equation}
    r_{p,n} \approx r_p - t_n d \sin\theta_p + \frac{(1-\sin^2\theta_P)(t_n d)^2}{2 r_p}. \label{tx}
\end{equation}
Transform $(b)$ is derived by factoring out and cancelling the term $e^{jkd^2\rho(\frac{\beta}{2\rho d})^2}$ as it does not affect the modulus of $\mu$. Equation (\ref{mc2}) is further approximated by the integral form considering a large number of antenna elements of the antenna array.

For the cases of computing the coherence of $\{\mathbf{b}(\theta_p, r_p)$, $  \mathbf{a}(\theta_q)\}$ and $\{\mathbf{a}(\theta_p)$, $  \mathbf{b}(\theta_q, r_q)\}$, from Eq. (\ref{tx}) it can be seen that $\mathbf{a}(\theta) = \mathbf{b}(\theta, r)$ when $r \rightarrow \infty$. Therefore, $|\mathbf{b}(\theta_p,r_p)^H\mathbf{a}(\theta_q)|$ is transformed into $|\mathbf{b}(\theta_p,r_p)^H\mathbf{b}(\theta_q, \infty)|$ so the above proof for the coherence of two near-field steering vectors is also applicable, which ends the proof. 
\end{appendices}

\vspace{12pt}

\end{document}